\documentclass{article}

\usepackage[english]{babel}

\usepackage[letterpaper,top=2cm,bottom=2cm,left=3cm,right=3cm,marginparwidth=1.75cm]{geometry}

\usepackage{amsmath,amsthm}
\usepackage{graphicx}
\usepackage[colorlinks=true, allcolors = cyan]{hyperref}
\usepackage{todonotes}
\usepackage{cite}

\newtheorem{lemma}{Lemma}
\newtheorem{theorem}{Theorem}
\newtheorem*{remark}{Remark}
\newcommand{\ket}[1]{\left|{#1}\right\rangle}

\newcommand{\ketbra}[2]{{\left|{#1}\right\rangle\!\!\left\langle{#2}\right|}}

\newcommand{\im}{\ensuremath{\texttt{i}}}
\newcommand\comment[1]{}
\newcommand{\new}[1]{{ #1}}
\usepackage{soul}

\parskip .5\baselineskip
\parindent 1.5em
\clubpenalty 10000
\widowpenalty 10000
\brokenpenalty 10000

\title{\Large Representation of the Fermionic Boundary
Operator
}

\author{Ismail Yunus Akhalwaya$^{1}$ 
\and
Yang-Hui He$^2$ \and
Lior Horesh$^3$ \and
Vishnu Jejjala$^4$ \and
William Kirby$^{5}$ \and
Kugendran Naidoo$^4$ \and
Shashanka Ubaru$^3$}

\date{}

\begin{document}
\maketitle

\vspace{-.5cm}
\begin{center}
{\footnotesize
\href{mailto:IsmailA@za.ibm.com}{IsmailA@za.ibm.com},
\href{mailto:hey@maths.ox.ac.uk}{hey@maths.ox.ac.uk},
\href{mailto:lhoresh@us.ibm.com}{lhoresh@us.ibm.com},
\href{mailto:vishnu@neo.phys.wits.ac.za}{vishnu@neo.phys.wits.ac.za},\\
\href{mailto:}{william.kirby@ibm.com},
\href{mailto:9006597F@students.wits.ac.za}{9006597F@students.wits.ac.za},
\href{mailto:Shashanka.Ubaru@ibm.com}{Shashanka.Ubaru@ibm.com}
}
\end{center}

{\small
\noindent $^1$ IBM Research Africa, Johannesburg, 2000, South Africa\\
School of Computer Science and Applied Mathematics, University of the Witwatersrand, Johannesburg, WITS 2050, South Africa

\noindent $^2$ London Institute for Mathematical Sciences, Royal Institution, London W1S 4BS, UK \\
Department of Mathematics, City, University of London, EC1V0HB, UK \\
Merton College, University of Oxford, OX14JD, UK \\
School of Physics, NanKai University, Tianjin, 300071, P.R.\ China

\noindent $^3$ Mathematics of AI, IBM Research, T.J.\ Watson Research Center, Yorktown Heights, NY 10598, USA

\noindent $^4$ Mandelstam Institute for Theoretical Physics, School of Physics, NITheCS, and CoE-MaSS, University of the Witwatersrand, Johannesburg, WITS 2050, South Africa

\noindent $^5$ Department of Physics and Astronomy, Tufts University, Medford, MA 02155, USA\\ IBM Quantum, IBM Research, T.J.\ Watson Research Center, Yorktown Heights, NY 10598, USA
}

${}$

\begin{abstract}

The boundary operator is a linear operator that acts on a collection of high-dimensional binary points (simplices) and maps them to their boundaries. This boundary map is one of the key components in numerous applications, including differential equations, machine learning, computational geometry, machine vision and control systems. We consider the problem of representing the full boundary operator on a quantum computer. We first prove that the boundary operator has a special structure in the form of a complete sum of fermionic creation and annihilation operators. We then use the fact that these operators pairwise anticommute to produce an $O(n)$-depth circuit that exactly implements the boundary operator without any Trotterization or Taylor series approximation errors. Having fewer errors reduces the number of shots required to obtain desired accuracies.

\end{abstract}

\newpage

\section{Introduction}
Quantum computers are capable of performing certain linear algebraic operations in exponentially large spaces, and promise to achieve significant asymptotic speed-ups over classical computers~\cite{feynman1982simulating, hirvensalo2003quantum}. In recent years, several quantum algorithms have been proposed to leverage the potential of quantum computing~\cite{nielsen2010quantum,biamonte2017quantum,schuld2015introduction,schuld2019quantum, lloyd2016quantum, havlivcek2019supervised}. 
These algorithms achieve polynomial to exponential speedups over the best-known classical methods. However, most of these methods require large-scale fault-tolerant quantum computers in order to challenge the classical methods in practice. 

The realization of fault-tolerance in quantum computers is likely at least several years away. Present day quantum computers are referred to as noisy intermediate-scale quantum
(NISQ)~\cite{preskill2018quantum} devices, and are too small to implement error correction, but too large to simulate classically~\cite{zhao2020measurementreduction}. Development of algorithms that achieve quantum advantage for useful tasks on NISQ devices is a critical next step for the field~\cite{aaronson2015read,preskill2018quantum,liu2020rigorous,zhao2020measurementreduction}. 

Boundary operators (also known as boundary maps) are amongst the most important computational primitives used for the representation and analysis of differential equations~\cite{edwards2016differential,berry2014high,lloyd2020quantum}, finite element methods~\cite{zienkiewicz1977finite, montanaro2016quantum}, graph and network analysis~\cite{chung1993laplacian, mo1998study, muhammad2006control}, computational geometry~\cite{brisson1993representing,zomorodian2005computing}, machine vision~\cite{jain1995machine,pietikainen2000texture}, control systems~\cite{fattorini1968boundary, golo2004hamiltonian}, and more. They are also used to bridge the gap between discrete representations, such as graphs and simplicial complexes, and continuous representations, such as vector spaces and manifolds. The graph Laplacians (including higher-order combinatorial Laplacians) can be constructed using the boundary operator ~\cite{hatcher2005algebraic, munkres2018elements}. Laplacians of graphs and hypergraphs play an important role in
spectral clustering~\cite{ng2002spectral}, a computationally tractable solution to the graph partitioning problem that is prevalent in applications such as image segmentation~\cite{shi2000normalized}, collaborative recommendation~\cite{fouss2007random}, text categorization~\cite{kamvar2003spectral}, and manifold learning~\cite{zhang2010clustered}. 
Finally, graph Laplacians are crucial in graph neural network architectures~\cite{kipf2016semi,zhou2020graph,yan2021persistence}, a recent set of neural network techniques for learning graph representations. 

The boundary operator also plays a critical role in Topological Data Analysis (TDA), as a linear operator that acts on a given set of simplices and maps them to their boundaries.
TDA is a powerful machine learning and data analysis technique used to extract shape-related information of large data sets~\cite{zomorodian2005computing,ghrist2008barcodes,bubenik2015statistical,wasserman2018topological}. TDA permits representing large volumes of data using a few global and interpretable features called Betti numbers~\cite{ghrist2008barcodes}. Classical algorithms for TDA and Betti number calculations are typically computationally expensive. In~\cite{lloyd2016quantum}, Lloyd \emph{et al.}~proposed a quantum algorithm for TDA, and this algorithm provably achieves exponential speed up over classical TDA algorithms under certain conditions, as shown in~\cite{gyurik2020towards}.  However, their quantum TDA (or QTDA) algorithm requires fault-tolerance, due to the need for a full Quantum Phase Estimation (made worse by the embedded and repeated use of Grover's search). More recently, Ubaru \emph{et al.}~\cite{ubaru2021quantum} devised a QTDA algorithm that is more amenable to NISQ implementation. The algorithm requires only an $O(n)$-depth quantum circuit, and thus has the potential to be an early useful NISQ algorithm that provably (up to weak assumptions) achieves exponential speed up.

In~\cite{ubaru2021quantum}, the (generalized) boundary operator $B$ is represented in a novel tensor form, as a complete sum of fermionic creation and annihilation operators.\footnote{In~\cite{cade2021complexity}, the connection between supersymmetric many-body systems and the (co)homology problem is discussed, and independently of our work, the same fermionic annihilation operator representation for the (co)boundary map is introduced.} Then, the standard technique of applying a Hermitian operator on a quantum computer is implemented, namely executing the time-evolution unitary $e^{-\im t B}$ for short time $t$ and `solving' for the second term of the Taylor series ($-\im t B$). The time-evolution is approximated by Trotterization~\cite{lloyd1996universal,whitfield2011simulation,childs2018toward}. Using this technique, with certain gate cancellations, an $O(n)$-depth circuit representation can be obtained for the boundary operator~\cite{ubaru2021quantum}.
However, both the Taylor series expansion and Trotterization steps accrue errors and require many shots to accurately simulate $B$.

In this paper, we first present a proof of correctness of the fermionic representation of the boundary operator given in \cite{cade2021complexity,ubaru2021quantum}. Next, we use a technique developed in \cite{izmaylov2020unitarypartitioning,zhao2020measurementreduction} (where it is called \emph{unitary partitioning}) to exactly express the full Hermitian boundary operator as a unitary operator that has an efficient $O(n)$-depth construction in terms of quantum computing primitives.  This short depth circuit, without evolution and Trotterization, analytically implements the  boundary operator, allowing for far fewer measurement shots than the standard Hermitian-evolution technique (see Section \ref{sec:shots}).
This can be instrumental in many downstream applications such as QTDA~\cite{lloyd2016quantum,ubaru2021quantum}, in cohomology problems~\cite{cade2021complexity}, quantum algorithms for finite element methods~\cite{montanaro2016quantum}, solving partial differential equations~\cite{hochbruck2010exponential,berry2014high,lloyd2020quantum}, and potential quantum algorithms for machine vision and control systems.

\section{The Boundary Operator and its Fermionic Representation}

In this section, we first introduce the concept of boundary operators in the context of computational geometry and TDA. We then discuss the fermionic creation and annihilation operators, and present a fermionic representation for the boundary operator along with a proof establishing its correctness.

\subsection{Computational Geometry}
In \cite{lloyd2016quantum}, Lloyd \emph{et al.}~introduce a quantum algorithm for topological data analysis (QTDA). Simplices are represented by strings of $n$ bits. Each bit corresponds to a vertex, with zero indicating exclusion and one inclusion. Hence, on the quantum computer, the usual computational basis directly maps to simplices. There are $2^n$ computational basis vectors, one for each unique $n$-bit binary string, written $\ket{s_k}$, where $k$ indicates the number of vertices in the simplex (\textit{i.e.}, the number of ones in the binary string).\footnote{There is a difference in convention for how to index the bits of the binary string. In this paper, we try to be explicit and refer to `counting from the left' or `right'. For our introduced indices, we always index from 0.} In general, the quantum state vector (of length $2^n$) can be in a superposition of these basis vectors/simplices. Core to the QTDA algorithm is the restricted boundary operator, defined by its action on $k$-dimensional simplices (among $n$ vertices, hence the superscript $^{(n)}$ in the following):
\begin{align}
\label{lloyd_boundary_operator}
  \partial_k^{(n)}  \ket{s_k} &= \sum\limits_l (-1)^l \ket{s_{k-1}(l)} \,,
\end{align}
where $\ket{s_k}$ represents a simplex, and $\ket{s_{k-1}(l)}$ is the simplex of one dimension less than $\ket{s_k}$ with the same vertex set (containing $n$ vertices), but leaving out the $l^{\textrm{th}}$ vertex counting from the left.

\subsubsection{Fleshing out the Restricted Boundary Operator}
\label{s:del}

It helps to rewrite \eqref{lloyd_boundary_operator} without the use of index $l$, which hides some algorithmic steps, because $l$ presupposes that the locations of the ones are known (\textit{i.e.}, which vertices are \emph{in} a given simplex, \textit{e.g.}, $l=0$ refers to the first $1$ in the string reading from left-to-right). We rewrite using index $i$:
\begin{align}
\label{lloyd_boundary_operator_i}
\partial_k^{(n)}  \ket{s_k} &= \sum\limits_{i=0}^{n-1} \delta_{s_k[i], 1}(-1)^{\sum\limits_{j=i+1}^{n-1} s_k[j]} \ket{s_{k-1}(i)} \,,
\end{align}
where we now choose to locate each bit counting from the right starting with index 0. Here we introduce the notation $s_k[i]$ to represent the $i^{\textrm{th}}$ bit of the string. We keep $\ket{s_{k-1}(i)}$ to mean what Lloyd \emph{et al.}\ meant (except now $i$ is any bit index), namely $\ket{s_{k}}$ with the $i^{\textrm{th}}$ vertex set to 0. Crucially all references to $l$ have been replaced, including the implicitly required knowledge of the location of the last $1$ (the sum across $i$ simply runs from $0$ to $n-1$). In Appendix \ref{ap:del}, we illustrate this definition with explicit examples.

Now, $\partial_k^{(n)}$'s action on a single simplex can be split into two cases depending on whether the last vertex, indexed $n-1$, is in the simplex or not. The absence or presence of this last vertex is represented by the bit $s_k[n-1]$ being $0$ or $1$, respectively. If $s_k[n-1]=0$, the quantum state corresponding to this single simplex (call it $\ket{s_{k,0}}$) is a single computational basis state whose binary string has exactly $k$ one's somewhere in it, except that the left most bit is zero. If we write this basis state as an exponentially long vector it consists of a column of $2^{n-1}$ bits only one of which is $1$ (at a location whose binary string consists of those $k$ ones) followed by $2^{n-1}$ zeros. This is similarly the case for $\ket{s_{k,1}}$, but with the single $1$ appearing in the second half of the column vector.
\\
If $\ket{s_k}=\ket{s_{k,0}}$ for $k\leq n-1$,
\begin{align}
\label{lloyd_0}
\partial_k^{(n)}  \ket{s_{k,0}} &= \sum\limits_{i=0}^{n-1} \delta_{s_k[i], 1}(-1)^{\sum\limits_{j=i+1}^{n-1} s_k[j]} \ket{s_{k-1}(i)}\\\nonumber
&= \sum\limits_{i=0}^{n-2} \delta_{s_k[i], 1}(-1)^{\sum\limits_{j=i+1}^{n-2} s_k[j]} \ket{s_{k-1}(i)}\\\nonumber
&= \begin{pmatrix}
\partial_{k}^{(n-1)}&0\\
0&0
\end{pmatrix} \ket{s_{k,0}} \,.
\end{align}
\\
If $\ket{s_k}=\ket{s_{k,1}}$ for $k\leq n$,
\begin{align}
\label{lloyd_1}
\partial_k^{(n)}  \ket{s_{k,1}} &= \sum\limits_{i=0}^{n-1} \delta_{s_k[i], 1}(-1)^{\sum\limits_{j=i+1}^{n-1} s_k[j]} \ket{s_{k-1}(i)}\\ \nonumber
&= \ket{s_{k-1}(n-1)} + (-1)\sum\limits_{i=0}^{n-2} \delta_{s_k[i], 1}(-1)^{\sum\limits_{j=i+1}^{n-2} s_k[j]} \ket{s_{k-1}(i)}\\ \nonumber
&= \begin{pmatrix}
0&P_{k-1}^{(n-1)}\\
0&0
\end{pmatrix} \ket{s_{k,1}} -1\begin{pmatrix}
0&0\\
0&\partial_{k-1}^{(n-1)}
\end{pmatrix} \ket{s_{k,1}} \,,
\end{align}
where $P_{k-1}^{(n-1)}$ is the projection on to the computational basis states of n-1 qubits whose binary strings contain exactly $k-1$ ones. See how the block diagonal notation is allowing us to remove or leave alone the $k^{\textrm{th}}$ one precisely in the $n^{\textrm{th}}$ position. In this particular equation, since we only act on $\ket{s_{k,1}}$ and remove the $n^{\textrm{th}}$ vertex, the projection could be replaced by the identity operator.

\subsection{Fermionic Boundary Operator}
Fermionic fields obey Fermi--Dirac statistics, which means that they admit a mode expansion in terms of creation and annihilation oscillators that anticommute.
Exploiting this fact, it is convenient to map Pauli spin operators to fermionic creation and annihilation operators.
The Jordan--Wigner transformation~\cite{jordan1928paulische} is one such mapping.
In this section, we will make use of it to express the boundary matrix.

The Lloyd \emph{et al.}\ restricted boundary operator given in \eqref{lloyd_boundary_operator} is not in a form that can be easily executed on a quantum computer, nor does it act on all orders, $k$, at the same time. In particular, it is a high-level description of the action of the boundary operator on a single generic $k$-dimensional simplex with the location of the ones assumed to be known. Ubaru \emph{et al.}~\cite{ubaru2021quantum} propose a novel representation that realises the full boundary operator as a matrix. Furthermore, this representation is in tensor product form composed of quantum computing primitives that directly map to quantum gates in the quantum circuit model.
To begin, define the operator:
\begin{equation}
    Q^+ := \frac12 ( \sigma_x +\im\sigma_y) = \left( \begin{array}{cc} 0 & 1 \cr 0 & 0 \end{array} \right) \,.
\end{equation}
This allows Ubaru \emph{et al.}\ to suggest writing the \emph{full} boundary operator in terms of the above operator:
\begin{align}
\label{full_boundary}
\nonumber
        \partial^{(n)} & := \sigma_z \otimes \ldots \otimes \sigma_z \otimes Q^+ \\ \nonumber
        &  + \sigma_z \otimes \ldots \otimes \sigma_z \otimes Q^+ \otimes I \\ \nonumber
        &  \vdots  \\ \nonumber
        &  + \sigma_z \otimes Q^+ \otimes I \otimes \ldots \\ \nonumber
        &  + Q^+ \otimes I \otimes I \otimes \ldots \\
        &  = \sum\limits_{i=0}^{n-1} a_i \,,
\end{align}
where the $a_i$ are the Jordan--Wigner~\cite{jordan1928paulische} Pauli embeddings corresponding to the $n$-spin fermionic annihilation operators, hence the title of this paper.\footnote{While this suggests that only the Jordan--Wigner mapping of fermionic operators does the job, it does open the question of whether other embeddings (\textit{e.g.}, Bravyi--Kitaev\cite{bravyi2002fermionic}) could also play a role, perhaps via translation between the different embeddings. More tantalizingly, since we have recast simplicial homology in terms of these fermionic operators it would be interesting to explore other geometric structures under this light.} To be explicit, $a_i$ is the antisymmetric annihilation operator on mode/orbital $i$ (indexed from the right, starting at zero):

\begin{align}
\label{ai}
a_i :=& \underbrace{\sigma_z \otimes \ldots \otimes \sigma_z}_{n-(i+1)} \otimes\,Q^+ \otimes \underbrace{I \otimes \ldots \otimes I}_{i} \\ \nonumber
=& \sigma_z^{\otimes(n-(i+1))} \otimes Q^+ \otimes I^{\otimes i} \,,\\
\{a_i,a_j^\dagger\}=& \delta_{ij}I^{\otimes n}~.
\end{align}
\noindent
When \eqref{full_boundary} is realised in matrix form, where the rows' and columns' integer indices (counting from left to right and top to bottom, beginning at zero) are translated into binary, the meaning of these strings correspond exactly to the simplex strings introduced above. Recall, a one at bit $i$ of the simplex string (counting from the right, starting from zero) corresponds to selecting the $i^{\textrm{th}}$ vertex.

The first key result we need to show is:
\begin{theorem}
The full boundary operator as defined in \eqref{full_boundary} is a sum over the restricted boundary operators over simplices of each dimension as defined in \S\ref{s:del} as follows
\begin{equation}
\label{rts}
\partial^{(n)} = \sum\limits_{k=1}^{n} \partial_k^{(n)}.
\end{equation}
\end{theorem}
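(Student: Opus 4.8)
The plan is to prove \eqref{rts} by induction on the number of vertices $n$, exploiting the block-triangular forms of the restricted operators already recorded in \eqref{lloyd_0} and \eqref{lloyd_1} and the matching block form of the tensor product in \eqref{full_boundary}. For the base case $n=1$, the sum in \eqref{full_boundary} collapses to the single term $Q^+$, while the only nonvanishing restricted operator is $\partial_1^{(1)}$, which sends the one-vertex simplex $\ket{1}$ to the empty simplex $\ket{0}$ with sign $(-1)^0$; as a $2\times 2$ matrix this is $Q^+$, so the two sides agree.

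For the inductive step I would peel off the top tensor factor, i.e.\ the qubit carrying the most significant bit, which by the convention of \eqref{ai} corresponds to vertex $n-1$. On the left-hand side, the single term $a_{n-1}=Q^+\otimes I^{\otimes(n-1)}=\begin{pmatrix}0 & I^{\otimes(n-1)}\\ 0 & 0\end{pmatrix}$ splits off, and each remaining $a_i$ with $i\le n-2$ factors as $\sigma_z\otimes a_i^{(n-1)}$, where $a_i^{(n-1)}$ is the mode-$i$ annihilation operator on $n-1$ qubits; hence $\sum_{i=0}^{n-2}a_i=\sigma_z\otimes\partial^{(n-1)}=\begin{pmatrix}\partial^{(n-1)} & 0\\ 0 & -\partial^{(n-1)}\end{pmatrix}$, giving $\partial^{(n)}=\begin{pmatrix}\partial^{(n-1)} & I^{\otimes(n-1)}\\ 0 & -\partial^{(n-1)}\end{pmatrix}$.

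Next I would compute the same block form for $\sum_{k=1}^{n}\partial_k^{(n)}$ by splitting the $2^n$-dimensional space into the top-bit-$0$ block (simplices avoiding vertex $n-1$) and the top-bit-$1$ block. On the top-bit-$0$ block, \eqref{lloyd_0} says each $\partial_k^{(n)}$ acts as $\partial_k^{(n-1)}$ in the upper-left corner and vanishes elsewhere; since such simplices carry at most $n-1$ vertices, summing and invoking the inductive hypothesis $\sum_{k=1}^{n-1}\partial_k^{(n-1)}=\partial^{(n-1)}$ yields the upper-left block $\partial^{(n-1)}$. On the top-bit-$1$ block, \eqref{lloyd_1} gives $\partial_k^{(n)}=\begin{pmatrix}0 & P_{k-1}^{(n-1)}\\ 0 & 0\end{pmatrix}-\begin{pmatrix}0 & 0\\ 0 & \partial_{k-1}^{(n-1)}\end{pmatrix}$; summing $k$ from $1$ to $n$, the shifted projectors add up to $\sum_{m=0}^{n-1}P_m^{(n-1)}=I^{\otimes(n-1)}$, since every $(n-1)$-bit string has weight in $\{0,\dots,n-1\}$, producing the upper-right block $I^{\otimes(n-1)}$, while $\sum_{m=0}^{n-1}\partial_m^{(n-1)}=\partial^{(n-1)}$ (using $\partial_0^{(n-1)}=0$) produces the lower-right block $-\partial^{(n-1)}$. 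Reassembling the four blocks gives precisely the matrix for $\partial^{(n)}$ found above, which closes the induction.

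The main obstacle, and really the only delicate point, is the bookkeeping at the extremes of $k$: that the $k=0$ piece contributes nothing, that $\partial_n^{(n)}$ lives entirely in the top-bit-$1$ block, and --- crucially --- that the shifted projector sum $\sum_{k=1}^{n}P_{k-1}^{(n-1)}$ is the \emph{full} identity on $n-1$ qubits rather than a proper subprojection; one must also keep the left/right indexing convention of \eqref{ai} aligned with the block decomposition so that the sign on the lower-right blocks comes out as $-\partial^{(n-1)}$. As an induction-free cross-check I would read off directly from \eqref{ai} that $a_i\ket{s}=\delta_{s[i],1}(-1)^{\sum_{j=i+1}^{n-1}s[j]}\ket{s_{k-1}(i)}$, which is precisely the $i$-th summand of \eqref{lloyd_boundary_operator_i}; summing over $i$ and over the weight-graded computational basis then recovers \eqref{rts} at once.
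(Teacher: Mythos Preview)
Your proposal is correct and follows essentially the same route as the paper: induction on $n$, deriving the block-triangular recurrence $\partial^{(n)}=\begin{pmatrix}\partial^{(n-1)} & I^{\otimes(n-1)}\\ 0 & -\partial^{(n-1)}\end{pmatrix}$ from the tensor definition, then using \eqref{lloyd_0} and \eqref{lloyd_1} together with the completeness relation $\sum_{m}P_m^{(n-1)}=I^{\otimes(n-1)}$ to match the blocks of $\sum_k\partial_k^{(n)}$. Your closing induction-free cross-check---reading off $a_i\ket{s}=\delta_{s[i],1}(-1)^{\sum_{j>i}s[j]}\ket{s_{k-1}(i)}$ directly from \eqref{ai} and recognizing it as the $i$-th summand of \eqref{lloyd_boundary_operator_i}---is a genuinely shorter alternative that the paper does not pursue; it bypasses the block bookkeeping entirely and would serve as a self-contained proof on its own.
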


\begin{remark}
A simple sum (as opposed to a direct sum) can be used because the Lloyd \emph{et al.}\ definition of $\partial_k$ has already been extended to a common vertex space of size $n$. The rewriting constructs the full boundary operator in one step and the individual $\partial_k$'s do not feature at all. Indeed to recover the $\partial_k$'s one would need to project into the appropriate space, which is a required step in implementing QTDA on quantum computers\cite{ubaru2021quantum}. The index $i$ of the $a_i$ fermionic operators indicates which qubit out of $n$ is acted upon by $Q^+$ (counting from the right) and has nothing to do with $\partial_k$ (and is why there is no $k$ dependence).
\end{remark}

\begin{proof}
The proof proceeds by induction, so we will first set up a recurrence relation.
\paragraph{Recurrence Relation: }
Equation \eqref{full_boundary} is amenable to efficient quantum circuit construction as we show in Section \ref{unitary_circuit}. However, first we need to prove that equation \eqref{full_boundary} is valid and correctly implements Lloyd \emph{et al.}'s high level description \eqref{rts}. The easiest way to prove equation \eqref{rts} is by noticing that the left hand side satisfies the following recurrence relation and then connecting the recurrence relation to the right hand side of \eqref{rts}:
\begin{align}
\label{recurrence}
    \partial^{(n)} &= Q^+ \otimes I^{\otimes(n-1)} + \sigma_z \otimes \partial^{(n-1)} \,,\\\nonumber
    \partial^{(1)} &= Q^+ \,.
\end{align}
It helps to write \eqref{recurrence} in block diagonal form:
\begin{equation}
\label{block_recurrence}
    \partial^{(n)} = 
    \begin{pmatrix}
   \partial^{(n-1)}  & I^{\otimes(n-1)} \\
    0 & -\partial^{(n-1)} \\
    \end{pmatrix} 
\end{equation}
and think about the action of $\partial^{(n)}$ in terms of $\partial^{(n-1)}$.

The operator $\partial^{(n-1)}$ acts on vectors of size $2^{n-1}$, while $\partial^{(n)}$ acts on vectors of size $2^{n}$. The vector of size $2^n$ can be seen as two halves of size $2^{n-1}$. The top half consists of simplices where the $n^{\textrm{th}}$ vertex is not in the simplex. The bottom half corresponds to a `copy' of the $n-1$ space but now with the $n^{\textrm{th}}$ vertex selected. For example, edges in the upper half become triangles in the lower half, since the $n^{\textrm{th}}$ vertex is added. The top-left block of $\partial^{(n)}$ in \eqref{block_recurrence} is $\partial^{(n-1)}$ acting on the `top-half' simplices, where the $n^\textrm{th}$ vertex is not selected ($s_k[n-1]=0$ and $k$ has to be strictly less than $n$) and returns simplices (the boundaries) in the top half (since taking the boundary can never add the $n^\textrm{th}$ vertex). The top-right block ($I^{\otimes(n-1)}$) acts on simplices in the bottom half ($s_k[n-1]=1$) and returns simplices in the top half, which corresponds to simply removing the $n^\textrm{th}$ vertex and leaving everything else the same (hence the identity operator). Finally, the bottom-right block acts on simplices in the bottom half ($s_k[n-1]=1$) and returns simplices in the bottom half, corresponding to those returned by taking the boundary operator acting on the first $n-1$ vertices and \emph{leaving} the $n^\textrm{th}$ vertex selected. The bottom returned vertices need a multiplication by $-1$ because acting on the first $n-1$ vertices by $\partial^{(n-1)}$ does not factor in the extra minus one from the $n^\textrm{th}$ vertex in the definition \eqref{lloyd_boundary_operator_i}.

\paragraph{Inductive Proof of Simplex Action: }
We now prove \eqref{rts} via induction.
The base case, $\partial^{(1)} = \partial_1^{(1)}$, clearly implements the Lloyd \emph{et al.}\ definition because it takes the single vertex vector to the null vector as required. Now we assume $\partial^{(n)}$ for $n\geq1$ implements the sum of restricted boundary operators \eqref{rts} up to $n$, and use that to show that $\partial_{n+1}$ correctly implements the sum up to $n+1$, i.e., that
\begin{equation}\label{rts_induction}
\partial^{(n+1)} = \sum\limits_{k=1}^{n+1} \partial_k^{(n+1)}\,.
\end{equation}
From \eqref{block_recurrence}, the left hand side of \eqref{rts_induction} is
\begin{equation}\partial^{(n+1)} =     \begin{pmatrix}
    \partial^{(n)}  & I^{\otimes(n)} \\
    0 & -\partial^{(n)} \\
    \end{pmatrix} \,, 
\end{equation}
and from the inductive assumption \eqref{rts}, this becomes
\begin{equation}
\label{lhs_final}
    \partial^{(n+1)} =     \begin{pmatrix}
    \sum\limits_{k=1}^{n} \partial_k^{(n)} & I^{\otimes n} \\
    0 & -\sum\limits_{k=1}^{n} \partial_k^{(n)} \\
    \end{pmatrix} \,.
\end{equation}

The right hand side of \eqref{rts_induction} contains terms of the form $\partial_k^{(n+1)}$. All $\partial_k^{(n+1)}$ act on $n+1$ vertices, so in matrix form they are all of the same dimension.
From \eqref{lloyd_0} and \eqref{lloyd_1} we have
\begin{align}
\label{lloyd_0_second}
\partial_k^{(n+1)} \ket{s_{k,0}}
&= \begin{pmatrix}
\partial_{k}^{(n)}&0\\
0&0
\end{pmatrix} \ket{s_{k,0}}, \quad 1\leq k\leq n \,, \\
\label{lloyd_1_second}
\partial_k^{(n+1)}  \ket{s_{k,1}}
&= \begin{pmatrix}
0&P_{k-1}^{(n)}\\
0&-\partial_{k-1}^{(n)}
\end{pmatrix} \ket{s_{k,1}}, \quad 1\leq k\leq n+1 \,.
\end{align}
Now since any $\ket{s}=\sum_k\ket{s_k}= \sum_k (\ket{s_{k,0}} + \ket{s_{k,1}})$ and the matrix in \eqref{lloyd_0_second} takes $\ket{s_{k,1}}$ to the null vector: $\begin{pmatrix}
\partial_{k}^{(n)}&0\\
0&0
\end{pmatrix} \ket{s_{k,1}} = 0$, while the matrix in \eqref{lloyd_1_second} similarly takes $\ket{s_{k,0}}$ to the null vector: $\begin{pmatrix}
0&P_{k-1}^{(n)}\\
0&-1 \cdot \partial_{k-1}^{(n)}
\end{pmatrix} \ket{s_{k,0}} = 0$,
we can combine \eqref{lloyd_0_second} and \eqref{lloyd_1_second}, while paying attention to different values of $k$:
\begin{align}
\partial_1^{(n+1)} &= \begin{pmatrix}
\partial_{1}^{(n)}&P_{0}^{(n)}\\
0&0
\end{pmatrix}, \quad k=1 \,, \\
\partial_k^{(n+1)}  &= \begin{pmatrix}
\partial_{k}^{(n)}&P_{k-1}^{(n)}\\
0&-\partial_{k-1}^{(n)}
\end{pmatrix}, \quad 2 \leq k \leq n \,, \\
\partial_{n+1}^{(n+1)}  &= \begin{pmatrix}
0&P_{n}^{(n)}\\
0&- \partial_{n}^{(n)}
\end{pmatrix}, \quad k = n+1 \,,
\end{align}
where $P_{0}^{(n)}=\ketbra{0\cdots 0}{0\cdots 0}$ ($n$ zeroes) and $P_{n}^{(n)}=\ketbra{1\cdots 1}{1\cdots 1}$ ($n$ ones). Note that $\partial_{n+1}^{(n+1)}$ produces no top-left block and $\partial_{1}^{(n+1)}$ produces no bottom-right block.

Therefore, the right hand side of \eqref{rts_induction} is:
\begin{align}
    \sum\limits_{k=1}^{n+1} \partial_k^{(n+1)} &=      \begin{pmatrix}
    \sum\limits_{k=1}^{n} \partial_k^{(n)} & I^{\otimes n} \\
    0 & -\sum\limits_{k=1}^{n} \partial_k^{(n)} \\
    \end{pmatrix},
\end{align}
since $\sum_{k=0}^{n} P_{k}^{(n)} = I^{\otimes n}$. Therefore the right-hand side is equal to the left-hand side \eqref{lhs_final}.
\end{proof}

With the proof in hand, we now understand why \eqref{full_boundary} and its recursive form \eqref{recurrence} can only produce \eqref{lloyd_boundary_operator} in summation form \eqref{rts}. In particular, $\partial_k$ on a simplex with one more potential vertex ($n+1$) is related to both $\partial_k$ and $\partial_{k-1}$ on the original number of vertices ($n$). Therefore, the full chain of sums is needed to build up a recursive construction, where we sequentially add one vertex at a time, which then makes a quantum-implementable tensor definition possible.

\section{Unitary circuit}
\label{unitary_circuit}
\new{The boundary operator above is not unitary. It is also not Hermitian which makes it more difficult to use known techniques to implement it on a quantum computer. As a first step we consider a Hermitian version of the boundary operator constructed by simply adding the Hermitian conjugate. It turns out that this Hermitian version happens to be a scaled unitary operator, therefore requiring no further work to obtain a unitary circuit. Naturally, the Hermitian version is not identical to the non-Hermitian version but a simple projection-based reconstruction is discussed below.}

Every $Q^+$ in the tensor product now becomes $\sigma_x = Q^+ + (Q^+)^\dagger$.
Hence, the full Hermitian boundary operator is
\begin{equation}\label{full_B}
    \begin{split}
        B = \partial + \partial^\dagger & = \sigma_z \otimes \ldots \otimes \sigma_z \otimes \sigma_x \\
        &  + \sigma_z \otimes \ldots \otimes \sigma_z \otimes \sigma_x \otimes I \\
        &  \vdots  \\
        &  + \sigma_z \otimes \sigma_x \otimes I \otimes \ldots \\
        &  + \sigma_x \otimes I \otimes I \otimes \ldots \\
& =
\sum\limits_{i=0}^{n-1}Q_i \,,
    \end{split}
\end{equation}
where $I$ denotes the single-qubit identity operator, and
\begin{equation}
    Q_i := a_i + a_i^\dagger \, ,
\end{equation}
with $a_i$ as defined in \eqref{ai}.
This, $Q_i$ is a Kronecker product of $n$ Pauli matrices, of which the $n-(i+1)$ leftmost are $\sigma_z$.
The goal is now to construct a unitary circuit to implement either $B$ or $\exp(\im B t)$ (depending on the application). Towards this end, it is extremely useful to note that:

\begin{lemma}
The $Q_i$'s pairwise anticommute:
$\{Q_i, Q_j\} = 0$ for $i \neq j$.
\end{lemma}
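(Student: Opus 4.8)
The plan is to read the anticommutation off directly from the explicit Kronecker-product form of the $Q_i$ given in \eqref{full_B}, using the elementary observation that a tensor product of single-qubit factors acquires an overall sign equal to the \emph{product} of the per-site commutation signs. Precisely, if $A=A_{n-1}\otimes\cdots\otimes A_0$ and $C=C_{n-1}\otimes\cdots\otimes C_0$ are Kronecker products of single-qubit operators with $A_m C_m=\epsilon_m\,C_m A_m$, $\epsilon_m\in\{+1,-1\}$, for every site $m$, then $AC=\bigl(\prod_{m=0}^{n-1}\epsilon_m\bigr)\,CA$. So it is enough to count at how many sites the single-qubit factors of $Q_i$ and $Q_j$ anticommute rather than commute.

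First I would fix $i\neq j$ and assume $i<j$ without loss of generality. From $Q_i=a_i+a_i^\dagger=\sigma_z^{\otimes(n-i-1)}\otimes\sigma_x\otimes I^{\otimes i}$, with sites labelled from the right starting at $0$, the factor list of $Q_i$ is: $I$ at sites $0,\dots,i-1$, then $\sigma_x$ at site $i$, then $\sigma_z$ at sites $i+1,\dots,n-1$; likewise $Q_j$ has $I$ at sites $0,\dots,j-1$, $\sigma_x$ at site $j$, and $\sigma_z$ at sites $j+1,\dots,n-1$.

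Next I would compare the two lists site by site. At sites $0,\dots,i-1$ both factors are $I$; at site $i$ the pair is $(\sigma_x,I)$; at sites $i+1,\dots,j-1$ it is $(\sigma_z,I)$; at sites $j+1,\dots,n-1$ it is $(\sigma_z,\sigma_z)$ --- all commuting, so $\epsilon_m=+1$ there. The single exceptional site is $j$: since $i<j$, site $j$ lies inside the $\sigma_z$-string of $Q_i$, so the pair there is $(\sigma_z,\sigma_x)$, and $\{\sigma_z,\sigma_x\}=0$ forces $\epsilon_j=-1$. Thus exactly one per-site sign equals $-1$, whence $Q_iQ_j=-Q_jQ_i$, i.e.\ $\{Q_i,Q_j\}=0$.

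This argument has essentially no obstacle; the only care needed is bookkeeping the two indexing conventions (sites counted from the right versus the left-to-right order of the Kronecker factors) and confirming the essential asymmetry: the $\sigma_x$ of the higher-index operator always collides with a $\sigma_z$ of the Jordan--Wigner string of the lower-index operator, while the $\sigma_x$ of the lower-index operator always sits under an identity of the higher-index operator --- so there is precisely one anticommuting collision, which is exactly the standard Jordan--Wigner mechanism. An alternative route is to write $\{Q_i,Q_j\}=\{a_i,a_j\}+\{a_i,a_j^\dagger\}+\{a_i^\dagger,a_j\}+\{a_i^\dagger,a_j^\dagger\}$ and invoke the canonical anticommutation relations, but that additionally needs $\{a_i,a_j\}=0$ beyond the relation quoted in \eqref{ai}, so the site-by-site count is the cleaner, self-contained choice.
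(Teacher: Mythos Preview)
Your proof is correct and follows essentially the same approach as the paper: both invoke the mixed-product identity for Kronecker products together with the elementary Pauli relations $\{\sigma_x,\sigma_z\}=0$ and $[\sigma_k,I]=0$ to locate the single anticommuting site. Your version is more explicit in carrying out the site-by-site bookkeeping, whereas the paper simply writes out the anticommutator in tensor form and declares the result immediate.
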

\begin{proof}
The mixed-product identity for the Kronecker product is
\begin{equation}
    (A \otimes B) \cdot (C \otimes D) = (AC)\otimes (BD) \,.
\end{equation}
Hence, we have (assuming without loss of generality that $i > j$ so that there are more $\sigma_z$ matrices in $Q_j$)
\begin{equation}
\label{pauli_anticomm}
\{Q_i, Q_j\} = \big\{\underbrace{\sigma_z \otimes \cdots \otimes \sigma_z}_{n-(i+1)} \otimes\,\sigma_x \otimes \underbrace{I\otimes \cdots \otimes I}_i\ , \quad \underbrace{\sigma_z \otimes \cdots \otimes \sigma_z}_{n-(j+1)}  \otimes\,\sigma_x  \otimes \underbrace{I\otimes \cdots \otimes I}_j \,\big\}.
\end{equation}
Since $\{\sigma_\ell, \sigma_k \} = 2 \delta_{\ell k} \mathbf{I}$ and $[\sigma_k, I] = 0$ for any $k=x,y,z$, \eqref{pauli_anticomm} immediately yields the result.
\end{proof}
We now use the fact that any real linear combination of pairwise anticommuting Pauli operators is unitarily equivalent to a single Pauli operator, up to some rescaling~\cite{izmaylov2020unitarypartitioning}.
One can think of this as a generalization of the Bloch sphere to more than three Pauli operators.
Moreover, the unitary that maps the linear combination to the single Pauli operator can be efficiently constructed.
This technique was developed for the purpose of reducing the number of distinct terms in Hamiltonians to be simulated using variational quantum eigensolvers, in which context it is known as \emph{unitary partitioning}~\cite{izmaylov2020unitarypartitioning,zhao2020measurementreduction,ralli2021measurementreduction}.

For neighbouring $Q$'s $Q_{i-1}$ and $Q_i$,
\begin{equation}
\begin{split}
    -\im Q_{i-1}Q_i & =
    -\im(\underbrace{\sigma_z \otimes\cdots\otimes \sigma_z}_{n-i}\otimes\,\sigma_x \otimes \underbrace{I \otimes \ldots \otimes I}_{i-1})\cdot (\underbrace{\sigma_z \otimes\cdots\otimes \sigma_z}_{n-(i+1)} \otimes\,\sigma_x \otimes \underbrace{I \otimes \ldots \otimes I}_i)\\
 & = -\im I^{\otimes(n-(i+1))} \otimes \sigma_z \sigma_x \otimes \sigma_x \otimes I^{\otimes(i-1)}
    \\
    & = -\im I^{\otimes(n-(i+1))} \otimes\im\sigma_y \otimes \sigma_x \otimes I^{\otimes(i-1)}
    \\
    & = Y_{i}X_{i-1} \,,
    \end{split}
\end{equation}
where $X_{i-1}$ and $Y_{i}$ are, respectively,
\begin{eqnarray}
X_{i-1} &=& I^{\otimes(n-i)}\otimes\,\sigma_x \otimes I^{\otimes(i-1)}, \\
Y_{i} &=& I^{\otimes(n-(i+1))} \otimes\,\sigma_y \otimes I^{\otimes i}.
\end{eqnarray}

Hence, $-\im Q_{i-1}Q_i$ is itself a Pauli operator, and commutes with all Pauli terms in $B$ except for $Q_{i-1}$ and $Q_{i}$.
Therefore, a rotation generated by $-\im Q_{i-1}Q_i$ only affects those two terms.
For an arbitrary linear combination $\alpha Q_{i}+\beta Q_{i-1}$ for real $\alpha$ and $\beta$, define the following rotation generated by $-\im Q_{i-1}Q_i$:
\begin{equation}
\label{adjoint_action}
    R_{i}\equiv\exp\left(\frac{Q_{i-1}Q_i}{2}\text{atan2}(\alpha,\beta)\right)=\exp\left(\frac{\im Y_{i}X_{i-1}}{2}\text{atan2}(\alpha,\beta)\right) \, .
\end{equation}
It can be shown that the adjoint action of $R_i$'s on the linear combination is
\begin{equation}
    R_{i}(\alpha Q_{i}+\beta Q_{i-1})R_{i}^\dagger=\sqrt{\alpha^2+\beta^2}Q_{i-1} \,.
\end{equation}
For details, we refer the reader to the unitary partitioning papers~\cite{izmaylov2020unitarypartitioning,zhao2020measurementreduction,ralli2021measurementreduction}.

Therefore, we can map $B$, which as given in \eqref{full_B} is the sum of all of the $Q_i$s, to a single $Q_i$ via a sequence of such rotations as follows:
\begin{equation}
\begin{split}
    B=\sum\limits_{i=0}^{n-1}Q_i\quad&\xrightarrow{R_{n-1}\text{ with $\alpha=1$ \,, $\beta=1$}}\quad\sqrt{2}Q_{n-2}+\sum\limits_{i=0}^{n-3}Q_i\\
    &\xrightarrow{R_{n-2}\text{ with $\alpha=\sqrt{2}$, $\beta=1$}}\quad\sqrt{3}Q_{n-3}+\sum\limits_{i=0}^{n-4}Q_i\\
    &\xrightarrow{R_{n-3}\text{ with $\alpha=\sqrt{3}$, $\beta=1$}}\quad\sqrt{4}Q_{n-4}+\sum\limits_{i=0}^{n-5}Q_i\\
    &\qquad\vdots\\
    &\xrightarrow{R_{2}\text{ with $\alpha=\sqrt{n-2}$, $\beta=1$}}\quad\sqrt{n-1}Q_1+Q_0\\
    &\xrightarrow{R_{1}\text{ with $\alpha=\sqrt{n-1}$, $\beta=1$}}\quad\sqrt{n}Q_0 \,,
\end{split}
\end{equation}
where each arrow represents an application of $R_i$ as in \eqref{adjoint_action} with the given values of $\alpha$ and $\beta$.
Hence, we map $B$ to $\sqrt{n}Q_0=\sqrt{n}X_0$, \textit{i.e.}, a single-qubit Pauli $\sigma_x$, via $n-1$ rotations generated by two-qubit Paulis $Y_{i}X_{i-1}$.

Let $R$ denote this entire sequence of rotations, \textit{i.e.},
\begin{equation}
\label{RRi}
R = \prod_{i=(n-1)}^{i=1} R_i \,.
\end{equation}
In terms of $R$, the above result is
\begin{equation}
    RBR^\dagger=\sqrt{n}X_0 \,,
\end{equation}
which implies that
\begin{equation}
\label{finalboundary}
    B=\sqrt{n} R^\dagger X_0 R \,.
\end{equation}
This is all that is needed in the QTDA use case. Since $R^\dagger X_0 R$ is unitary, it can be implemented as a quantum circuit, and to obtain $B$ the constant of proportionality can be included during classical postprocessing.

\paragraph{Extensions:}
\new{ In certain} use cases (such as propagators in differential equations), it may be desitred to implement the time evolution of $B$. For these cases, in order to implement the time-evolution generated by $B$ for a time $t$, we can apply $R$, then implement the time-evolution generated by $\sqrt{n}X_0$ for time $t$ (analytically), and then invert $R$. \new{This exponentiation is achieved analytically and therefore does not suffer from any Trotterization error.}:
\begin{equation}
\label{expB}
    e^{-\im Bt}=R^{\dagger} e^{-\im\sqrt{n}X_0t}R \,,
\end{equation}
where $R$ and $R_i$ are as defined in \eqref{adjoint_action} and \eqref{RRi}. The cost of implementing this evolution is independent of $t$, since we can classically precompute $\sqrt{n}t\mod2\pi$ and implement the rotation generated by $X_0$ through this angle.

\new{Finally, in applications where we require the non-Hermitian boundary operator $\partial_k$  (e.g., computational geometry), we can apply projection operators on either side of $B$ to compute $\partial_k = P_{k-1}BP_k$, see~\cite{ubaru2021quantum} for details on how to construct these projectors efficiently on a quantum computer.}

\section{Discussion: Circuit Depth and Shots}
\label{sec:shots}
The above unitary form of the boundary operator \eqref{finalboundary} has depth $O(n)$ since there are $2 (n-1)$ rotations and one $X_0$. To be precise, the dependence on $n$ is contained in $R$, which is a sequence of $n-1$ two-qubit rotations, so in total we require $2(n-1)$ two-qubit rotations.
To finalize the $O(n)$ discussion, we must only explain that each Pauli rotation (involving two qubits) can be implemented in constant depth. For example, the standard way to evolve a single Pauli string is to change the basis of each of the qubits affected by a $\sigma_x$ or $\sigma_y$ into the $\sigma_z$ basis (we only have two such qubits for each rotation, which is independent of $n$) followed by rotation around the $z$-axis while accounting for parity. Therefore the two-qubit rotations are independent of $n$ and the overall depth of \eqref{finalboundary} is $O(n)$.

\new{With the boundary operator performed analytically there is a significant saving in the number of shots needed compared to the approximate method of the original fermionic boundary operator \eqref{full_B} as used in  $\Delta$ \cite{ubaru2021quantum}. 
In order to account for the Trotterization error, which for an $n$-qubit Hamiltonian written as $B = \sum_{i=0}^{n-1} Q_i$, we have the first order approximation 
$
e^{-i\sum_{i=0}^{n-1} Q_i t} =\prod_{i=0}^{n-1} e^{-iQ_i t} + O( t^2)$. Therefore the error due Trotterization is $\epsilon_T \sim O(t^2)$. 

Then the Taylor series expansion for the Hamiltonian simulation is given by $ e^{-iBt} = 1 - iBt - B^2t^2/2 + O(B^3t^3) = 1 - iBt  + O(nt^2)$, since $B^2 = nI$. Solving for $B$ yields: $ B = (e^{-iBt}+ i+ \epsilon_{TS})/t$, where $\epsilon_{TS} \sim O(nt^2)$.
Therefore, including $\epsilon_T$, 
\[B = \left(\prod_{i=0}^{n-1} e^{-iQ_i t}+ i + \epsilon_{TS} + \epsilon_T\right)/t.\]

Next, taking the expectation produces shot noise, $\epsilon_{\textrm{shot}}\sim\frac{1}{\sqrt{N}}$. Similar to $\epsilon_T$ and $\epsilon_{TS}$, $\epsilon_{\textrm{shot}}$ gets amplified by the fractional $t$ denominator: $\epsilon_{\textrm{shot}}/t$. Therefore, the overall error is:
 \[\epsilon = (\epsilon_T + \epsilon_{TS} + \epsilon_{\textrm{shot}})/t.\]
Now, each of these errors should be of the same order as the desired order of the precision $\epsilon$. Handling each error independently and solving for $t$ in the stronger Taylor series constraint, we have $t \sim O(\epsilon/n)$. From the shot noise constraint and solving for $N$ we have: $N \sim O(1/(\epsilon^2t^2))$. Finally, inserting the above $t$ dependence on $\epsilon$ yields $N \sim O(n^2/\epsilon^4)$ for the approximate version of the original fermionic boundary operator\cite{ubaru2021quantum}.


In contrast, for the analytic circuit presented in this paper, since there is no Trotterization and Taylor expansion error, the number of samples needed is $N \sim  O(\frac{1}{\epsilon^{2}})$, which is a quadratic saving (and more for higher moments).}

\section{Conclusion}
In this paper we provided a short-depth, $O(n)$, quantum circuit for the exact, analytical implementation of the full Hermitian boundary operator on a gate-based quantum computer. This is a significant improvement over previous approximate implementations, resulting in at least quadratic savings in the number of shots. In order to achieve this we formally proved that the boundary operator can be written as a sum of fermionic creation and annihilation operators. This connection between algebraic geometry and fermionic operators opens a potentially rich vein for further exploration. The fermionic representation together with the convenient property of pairwise Pauli anticommutation permits implementation via a short circuit consisting of a cascade of two-qubit rotations. With such a short-depth circuit for the full boundary operator, the door is open for the search for quantum implementations of many algorithms that have the boundary operator as a core component, including in the fields of differential equations and machine learning.

\section*{Acknowledgements}
YHH would like to thank UK STFC for grant ST/J00037X/2.
VJ is supported by the South African Research Chairs Initiative of the Department of Science and Technology and the National Research Foundation and by the Simons Foundation Mathematics and Physical Sciences Targeted Grant, 509116.
WMK is supported by the National Science Foundation, Grant No. DGE-1842474.

\appendix
\section{Explicit Examples}\label{ap:del}
In this appendix, we illustrate the formulae in the main text with the explicit example of $n=3$.
There are three vertices here, which we will call $v_{0,1,2}$.
We then use binary degree-lexicographic representation of the 8 simplices constructible from these vertices as follows:
\begin{equation}
\mbox{
\begin{tabular}{c|c|c}
Dimension & Binary Representation & Simplex \\ \hline
- & 000 & $\emptyset$  \\ \hline
0 & 001 & $v_0$ \\
  & 010 & $v_1$ \\
  & 100 & $v_2$ \\ \hline
1 & 011 & line$(v_0,v_1)$	\\
  & 101 & line$(v_0,v_2)$	  \\
  & 110 & line$(v_1,v_2)$	  \\ \hline
2 & 111 & triangle$(v_0,v_1,v_2)$ \\
\end{tabular}
}
\end{equation}

Let us check a few of the boundary operators as defined in \S\ref{s:del}.
Consider $\partial_1^{(3)}$, which should take a line to its boundary endpoints, with sign.
Take $\ket{s_{1}} = 011$, we have only 2 terms contribution since the Kronecker-delta picks out $v_1$ and $v_0$, \textit{i.e.}, $i=0,1$
\new{\begin{eqnarray}
    \partial_1^{(3)} \ket{s_{1}} &=&
    (-1)^{\sum\limits_{j=1}^{2} s_1(j)} v_0
    +
    (-1)^{\sum\limits_{j=2}^{2} s_1(j)} v_1
    \\\nonumber
    &= &
    (-1)^{1+0} v_0 + (-1)^{0} v_1
    \\\nonumber
    &= &v_1 - v_0 \ .
\end{eqnarray}}
Next, take $\ket{s_{2}} = 111$ and act upon it with $\partial_2^{(3)}$, whereupon all 3 terms in the sum contribute:
\new{\begin{eqnarray}
    \partial_2^{(3)} \ket{s_{2}} &=&
    (-1)^{\sum\limits_{j=1}^{2} s_2(j)}
    v_0
    +
    (-1)^{\sum\limits_{j=2}^{2} s_2(j)}
    v_1
    +
    (-1)^{\sum\limits_{j=3}^{2} s_2(j)}
    v_2\\\nonumber
    &= &
    (-1)^{1+1}v_0 + (-1)^{1}v_1 + (-1)^0 v_2
    \\\nonumber
    &= & v_0 - v_1 + v_2 \ ,
\end{eqnarray}}
which says that that the boundary of the triangle are the 3 (signed) edges.

\subsection{The Boundary Operator $\partial^{(n)}$}
We now write down the boundary operator $\partial^{(n)}$, constructed from the pieces $\partial^{(n)}_k$ given in \eqref{full_boundary}, explicitly.
The initial cases are simply:
\begin{align}
\nonumber
    \partial^{(1)} 
    &= \left( \begin{array}{cc} 0 & 1 \cr 0 & 0 \end{array} \right) ~,
    \\
\nonumber
    \partial^{(2)} &= a_0 + a_1
    = \sigma_z \otimes Q^{+} + Q^{+} \otimes I =
    {\scriptsize
\left(
\begin{array}{cccc}
 0 & 1 & 0 & 0 \\
 0 & 0 & 0 & 0 \\
 0 & 0 & 0 & -1 \\
 0 & 0 & 0 & 0 \\
\end{array}
\right) +
\left(
\begin{array}{cccc}
 0 & 0 & 1 & 0 \\
 0 & 0 & 0 & 1 \\
 0 & 0 & 0 & 0 \\
 0 & 0 & 0 & 0 \\
\end{array}
\right)
}
 = 
 {\scriptsize
\left(
\begin{array}{cccc}
 0 & 1 & 1 & 0 \\
 0 & 0 & 0 & 1 \\
 0 & 0 & 0 & -1 \\
 0 & 0 & 0 & 0 \\
\end{array}
\right) ~,
}
\\
\nonumber
\partial^{(3)} &= 
a_0 + a_1 + a_2 
= \sigma_z \otimes \sigma_z \otimes Q^{+} 
+ \sigma_z \otimes Q^{+}  \otimes I
+ Q^{+} \otimes I \otimes I
\\
\nonumber
& =
{\tiny
 \left(
\begin{array}{cccccccc}
 0 & 1 & 0 & 0 & 0 & 0 & 0 & 0 \\
 0 & 0 & 0 & 0 & 0 & 0 & 0 & 0 \\
 0 & 0 & 0 & -1 & 0 & 0 & 0 & 0 \\
 0 & 0 & 0 & 0 & 0 & 0 & 0 & 0 \\
 0 & 0 & 0 & 0 & 0 & -1 & 0 & 0 \\
 0 & 0 & 0 & 0 & 0 & 0 & 0 & 0 \\
 0 & 0 & 0 & 0 & 0 & 0 & 0 & 1 \\
 0 & 0 & 0 & 0 & 0 & 0 & 0 & 0 \\
\end{array}
\right)
 }
+
{\tiny
\left(
\begin{array}{cccccccc}
 0 & 0 & 1 & 0 & 0 & 0 & 0 & 0 \\
 0 & 0 & 0 & 1 & 0 & 0 & 0 & 0 \\
 0 & 0 & 0 & 0 & 0 & 0 & 0 & 0 \\
 0 & 0 & 0 & 0 & 0 & 0 & 0 & 0 \\
 0 & 0 & 0 & 0 & 0 & 0 & -1 & 0 \\
 0 & 0 & 0 & 0 & 0 & 0 & 0 & -1 \\
 0 & 0 & 0 & 0 & 0 & 0 & 0 & 0 \\
 0 & 0 & 0 & 0 & 0 & 0 & 0 & 0 \\
\end{array}
\right)
 }
+
{\tiny
\left(
\begin{array}{cccccccc}
 0 & 0 & 0 & 0 & 1 & 0 & 0 & 0 \\
 0 & 0 & 0 & 0 & 0 & 1 & 0 & 0 \\
 0 & 0 & 0 & 0 & 0 & 0 & 1 & 0 \\
 0 & 0 & 0 & 0 & 0 & 0 & 0 & 1 \\
 0 & 0 & 0 & 0 & 0 & 0 & 0 & 0 \\
 0 & 0 & 0 & 0 & 0 & 0 & 0 & 0 \\
 0 & 0 & 0 & 0 & 0 & 0 & 0 & 0 \\
 0 & 0 & 0 & 0 & 0 & 0 & 0 & 0 \\
\end{array}
\right)
} \\
&=
{\tiny
\left(
\begin{array}{cccccccc}
 0 & 1 & 1 & 0 & 1 & 0 & 0 & 0 \\
 0 & 0 & 0 & 1 & 0 & 1 & 0 & 0 \\
 0 & 0 & 0 & -1 & 0 & 0 & 1 & 0 \\
 0 & 0 & 0 & 0 & 0 & 0 & 0 & 1 \\
 0 & 0 & 0 & 0 & 0 & -1 & -1 & 0 \\
 0 & 0 & 0 & 0 & 0 & 0 & 0 & -1 \\
 0 & 0 & 0 & 0 & 0 & 0 & 0 & 1 \\
 0 & 0 & 0 & 0 & 0 & 0 & 0 & 0 \\
\end{array}
\right) ~.
}
\end{align}

\subsubsection{Nilpotency}
Let us check that the boundary operator is nilpotent, as is required from homology:
\begin{lemma}
The boundary operator satisfies
$\partial^{(n)} \cdot \partial^{(n)} = 0$.
\end{lemma}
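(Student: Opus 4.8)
The cleanest route is induction on $n$ using the block form \eqref{block_recurrence}. The base case $n=1$ is immediate: $\partial^{(1)}=\left(\begin{smallmatrix}0&1\\0&0\end{smallmatrix}\right)$ squares to zero. For the inductive step, assume $\partial^{(n-1)}\cdot\partial^{(n-1)}=0$ and compute $\partial^{(n)}\cdot\partial^{(n)}$ directly from
\begin{equation}
\partial^{(n)}=\begin{pmatrix}\partial^{(n-1)}&I^{\otimes(n-1)}\\0&-\partial^{(n-1)}\end{pmatrix}.
\end{equation}
Multiplying the block matrix by itself, the $(1,1)$ block is $\partial^{(n-1)}\partial^{(n-1)}=0$, the $(2,1)$ and $(2,2)$ blocks involve only $\partial^{(n-1)}\partial^{(n-1)}$ and $0$, and the only nontrivial entry is the $(1,2)$ block, which equals $\partial^{(n-1)}\cdot I^{\otimes(n-1)} + I^{\otimes(n-1)}\cdot(-\partial^{(n-1)}) = \partial^{(n-1)}-\partial^{(n-1)}=0$. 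Hence $\partial^{(n)}\cdot\partial^{(n)}=0$, closing the induction.

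\textbf{Alternative via the fermionic representation.} A slicker, non-inductive argument uses the theorem proved above that $\partial^{(n)}=\sum_{i=0}^{n-1}a_i$ with the $a_i$ the Jordan--Wigner annihilation operators. Then $(\partial^{(n)})^2=\sum_{i,j}a_ia_j=\tfrac12\sum_{i,j}\{a_i,a_j\}$, and the fermionic anticommutation relations give $\{a_i,a_j\}=0$ for all $i,j$ (including $i=j$, since $a_i^2=0$ as $Q^+Q^+=0$). Therefore $(\partial^{(n)})^2=0$. One should verify the anticommutator $\{a_i,a_j\}=0$ separately from $\{a_i,a_j^\dagger\}=\delta_{ij}I^{\otimes n}$ stated in \eqref{ai}; this is the same Kronecker mixed-product computation as in the Lemma on the $Q_i$, using $\{\sigma_z,\sigma_x\}=0$ and $\sigma_x\sigma_x=I$, which produces a cancellation between the $a_ia_j$ and $a_ja_i$ terms.

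\textbf{Main obstacle.} Neither route has a genuine obstacle; the only thing to be careful about is sign bookkeeping. In the block-matrix route, the crucial cancellation in the $(1,2)$ block relies precisely on the $-\partial^{(n-1)}$ in the lower-right block, i.e.\ on the extra minus sign that the definition \eqref{lloyd_boundary_operator_i} assigns to the $n^{\textrm{th}}$ vertex; getting this sign right is the entire content of the proof. In the fermionic route, the subtlety is that one needs $\{a_i,a_j\}=0$ rather than just the canonical relation involving $a_j^\dagger$, so one must confirm that the Jordan--Wigner strings for $a_i$ and $a_j$ (both carrying $Q^+$, not $Q^+$ and $(Q^+)^\dagger$) anticommute — which follows because on the overlap of their $\sigma_z$-tails exactly one site sees a $\sigma_z$ against a $\sigma_x$. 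I would present the block-matrix induction as the main proof for self-containedness and mention the fermionic one-liner as a remark.
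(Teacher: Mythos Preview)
Your block-matrix induction is correct and is in fact cleaner than the paper's own argument. The paper also proceeds by induction, but instead of the block form \eqref{block_recurrence} it uses the tensor recurrence $\partial^{(n+1)}=\partial^{(n)}\otimes I+\sigma_z^{\otimes n}\otimes Q^{+}$ (the mirror-image of \eqref{recurrence}, peeling off the lowest-index qubit). Expanding the square via the mixed-product identity, the diagonal terms vanish by the inductive hypothesis and $(Q^{+})^{2}=0$, but the cross terms leave $\{\sigma_z^{\otimes n},\partial^{(n)}\}\otimes Q^{+}$, and the paper then runs a \emph{second}, nested induction to show that this anticommutator vanishes.

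Your route avoids that auxiliary lemma entirely: in the block picture the off-diagonal cancellation $\partial^{(n-1)}\cdot I - I\cdot\partial^{(n-1)}=0$ is immediate, because the sign that the paper has to chase through $\{\sigma_z^{\otimes n},\partial^{(n)}\}$ has already been absorbed into the $-\partial^{(n-1)}$ of the lower-right block. What the paper's approach does buy is that the anticommutation relation $\{\sigma_z^{\otimes n},\partial^{(n)}\}=0$ is isolated as a standalone fact, which is of independent interest (it is essentially the parity/grading statement underlying the fermionic picture). Your fermionic alternative via $\{a_i,a_j\}=0$ is also correct and, as you note, requires only the same mixed-product computation used for the $Q_i$'s; presenting the block induction as the main proof with the fermionic version as a remark is a sensible choice.
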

\begin{proof}
we can proceed by induction.
Immediately, one checks that $(\partial^{(1)})^2 = (\partial^{(2)})^2 = (\partial^{(3)})^2 = 0$ where $0$ is the $2^n \times 2^n$ matrix of zeros; thus, the initial terms are fine.
Next, assume that $(\partial^{(n)})^2 = 0$ as the induction hypothesis.
Now, we have (the third line uses the so-called mixed product rule for Kronecker and dot products): 
\begin{align}
\nonumber
(\partial^{(n+1)})^2 &= \left( \partial^{(n)} \otimes I + \sigma_z^{\otimes n} \otimes Q^{+}  \right)^2 \\
\nonumber
	 &= \left( \partial^{(n)} \otimes I \right)^2 + \left(\sigma_z^{\otimes n} \otimes Q^{+}  \right)^2
+  (\partial^{(n)} \otimes I) \cdot (\sigma_z^{\otimes n} \otimes Q^{+} ) + 
(\sigma_z^{\otimes n} \otimes Q^{+} ) \cdot (\partial^{(n)} \otimes I) \\
\nonumber
	&= (\partial^{(n)})^2 \otimes I^2 + \left(\sigma_z^{\otimes n}\right)^2 \otimes (Q^{+})^2 
	+ (\partial^{(n)} \cdot \sigma_z^{\otimes n}) \otimes (I \cdot Q^{+} ) + \\
\nonumber
&  \quad (\sigma_z^{\otimes n} \cdot \partial^{(n)} )  \otimes  ( Q^{+}  \cdot I ) 
\qquad
\mbox{[since $(A \otimes B) \cdot (C \otimes D) = (AC)\otimes (BD)$, the mixed-product identity]} \\
& = 0 + 0 + (\sigma_z^{\otimes n} \cdot \partial^{(n)} + \partial^{(n)} \cdot \sigma_z^{\otimes n}) \otimes v
\qquad
\mbox{[since $(\partial^{(n)})^2=(Q^{+})^2=0$]} ~.
\end{align}
Thus it suffices to prove the the anticommutation
\begin{equation}
\left\{ \sigma_z^{\otimes n}, \  \partial^{(n)} \right\} = \sigma_z^{\otimes n} \cdot \partial^{(n)} + \partial^{(n)} \cdot \sigma_z^{\otimes n} = 0 \ .
\end{equation}
To see this, we set up another, separate induction.
Clearly, the initial terms check out: $\{\partial^{(1)}, \ \sigma_z\} = \{\partial^{(2)}, \ \sigma_z^{\otimes 2}\} = 0$.
Now, the let the induction hypothesis be:
\begin{equation}
\{ \sigma_z^{\otimes n}, \ \partial^{(n)} \} = 0 \ ,
\end{equation}
and consider
\begin{align}
\nonumber
\left\{\sigma_z^{\otimes (n+1)}, \ \partial^{(n)} \right\} & = 
 \left\{\sigma_z^{\otimes (n+1)}, \ \partial^{(n)} \otimes I + \sigma_z^{\otimes n} \otimes Q^{+}  \right\}
 =
 \left\{\sigma_z^{\otimes n} \otimes \sigma_z, \ \partial^{(n)} \otimes I + \sigma_z^{\otimes n} \otimes Q^{+}  \right\}
 \\
 &=\left\{\sigma_z ^{\otimes n}, \ \partial^{(n)} \right\} \otimes (\sigma_z \cdot I) +
 (\sigma_z^{\otimes n} \cdot \sigma_z^{\otimes n}) \otimes \left\{ \sigma_z, \ Q^{+}  \right\} \ .
\end{align}
Now, the first term vanishes by induction hypothesis, and the second term also vanishes since $\left\{ \sigma_z, \ Q^{+}  \right\} = \{\partial^{(1)}, \ \sigma_z\} = 0$, the initial term. \end{proof}

\subsubsection{Hamiltonian}
We can also check the explicit form of the Hamiltonian  $\exp(i B^{(n)} t) = \exp\left(i (\partial^{(n)} + (\partial^{(n)})^\dagger 
)\right)$ against \eqref{expB}.

\paragraph{$n = 1$: }
Here, we have $B^{(1)} = \partial^{(1)} + (\partial^{(1)})^\dagger = 
\left( \begin{array}{cc} 0 & 1 \cr 1 & 0 \end{array} \right)$, so that
\begin{equation}
    \exp(i B^{(1)} t) = 
\left(
\begin{array}{cc}
 \cos (t) & -i \sin (t) \\
 -i \sin (t) & \cos (t) \\
\end{array}
\right) ~.
\end{equation}
On the right hand side of the theorem, we have $R = R_0 R_1$, where
$R_0 = I_2$ and
$R_1 = \exp(\frac12 Q_0 Q_1 \text{atan2(0,1)}) = I_2$.
Hence, the right hand side is just $\exp(-i X_0 t ) = \exp(-i \sigma_x t )$. Recalling that for integer $n$,
\begin{equation}
\sigma_x^{2n-1} = \sigma_x = \left( \begin{array}{cc} 0 & 1 \cr 1 & 0 \end{array} \right) ~, \qquad \sigma_x^{2n} = I_2 ~,
\end{equation}
we have equality.

\paragraph{$n = 2$: }
From the forms of $\partial^{(n)}$ we find:
\begin{equation}\label{left hand siden=2}
\exp(i B^{(2)} t) =
\left(
\begin{array}{cccc}
 \cos \left(\sqrt{2} t\right) & -\frac{i \sin \left(\sqrt{2} t\right)}{\sqrt{2}} & -\frac{i \sin \left(\sqrt{2} t\right)}{\sqrt{2}} & 0 \\
 -\frac{i \sin \left(\sqrt{2} t\right)}{\sqrt{2}} & \cos \left(\sqrt{2} t\right) & 0 & -\frac{i \sin \left(\sqrt{2} t\right)}{\sqrt{2}} \\
 -\frac{i \sin \left(\sqrt{2} t\right)}{\sqrt{2}} & 0 & \cos \left(\sqrt{2} t\right) & \frac{i \sin \left(\sqrt{2} t\right)}{\sqrt{2}} \\
 0 & -\frac{i \sin \left(\sqrt{2} t\right)}{\sqrt{2}} & \frac{i \sin \left(\sqrt{2} t\right)}{\sqrt{2}} & \cos \left(\sqrt{2} t\right) \\
\end{array}
\right)
 \ .
\end{equation}
We can now cross-check this explicit matrix against Theorem \ref{expB}.
We have, from $\partial^{(2)}$, that
$R = R_1 = \exp\left(\frac12 Q_0 Q_1 \text{atan2}(\sqrt{2-1},1) \right)$.
Thus,
\begin{align}
    R = R_1 = \exp\left(
    \frac{\pi}{8} (a_0 + a_0^\dagger)(a_1 + a_1^\dagger)
    \right) =
    \left(
\begin{array}{cccc}
 \cos \left(\frac{\pi }{8}\right) & 0 & 0 & \sin \left(\frac{\pi }{8}\right) \\
 0 & \cos \left(\frac{\pi }{8}\right) & \sin \left(\frac{\pi }{8}\right) & 0 \\
 0 & -\sin \left(\frac{\pi }{8}\right) & \cos \left(\frac{\pi }{8}\right) & 0 \\
 -\sin \left(\frac{\pi }{8}\right) & 0 & 0 & \cos \left(\frac{\pi }{8}\right) \\
\end{array}
\right) ~.
\end{align}
Hence,
\begin{equation}\label{right hand siden=2}
R^\dagger \exp(-i \sqrt{2} (a_0 + a_0^\dagger) t) R
=
\left(
\begin{array}{cccc}
 \cos \left(\sqrt{2} t\right) & -\frac{i \sin \left(\sqrt{2} t\right)}{\sqrt{2}} & -\frac{i \sin \left(\sqrt{2} t\right)}{\sqrt{2}} & 0 \\
 -\frac{i \sin \left(\sqrt{2} t\right)}{\sqrt{2}} & \cos \left(\sqrt{2} t\right) & 0 & -\frac{i \sin \left(\sqrt{2} t\right)}{\sqrt{2}} \\
 -\frac{i \sin \left(\sqrt{2} t\right)}{\sqrt{2}} & 0 & \cos \left(\sqrt{2} t\right) & \frac{i \sin \left(\sqrt{2} t\right)}{\sqrt{2}} \\
 0 & -\frac{i \sin \left(\sqrt{2} t\right)}{\sqrt{2}} & \frac{i \sin \left(\sqrt{2} t\right)}{\sqrt{2}} & \cos \left(\sqrt{2} t\right) \\
\end{array}
\right) ~.
\end{equation}
and we have perfect agreement of \eqref{left hand siden=2} and \eqref{right hand siden=2}.

\small

\end{document}